\documentclass[conference,a4paper]{IEEEtran}

\usepackage{soul}

\usepackage{xcolor}
\usepackage{overpic}
\usepackage{balance}
\usepackage{graphicx}

 \usepackage{amsmath,amssymb,amsthm}

\ifCLASSOPTIONcompsoc
 \usepackage[caption=false,font=normalsize,labelfont=sf,textfont=sf]{subfig}
\else
 \usepackage[caption=false,font=footnotesize]{subfig}
\fi

\def\Htran{\mbox{\tiny $\mathrm{H}$}}
\def\Ttran{\mbox{\tiny $\mathrm{T}$}}
\def\CN{\mathcal{N}_{\mathbb{C}}} 

 \newtheorem{lemma}{Lemma}
 \newtheorem{corollary}{Corollary}
 \newtheorem{theorem}{Theorem}
 \theoremstyle{remark}
 \newtheorem{remark}{Remark}

\IEEEoverridecommandlockouts

\begin{document}

\title{Beamfocusing Versus Airy Beams:\\ Does Antenna Blockage Require Beam Bending?}

\author{\IEEEauthorblockN{
Emil Björnson\IEEEauthorrefmark{1},   
Vitaly Petrov\IEEEauthorrefmark{1},  
Erik G. Larsson\IEEEauthorrefmark{2}   
}                                     
\IEEEauthorblockA{\IEEEauthorrefmark{1}
Department of Communication Systems, KTH Royal Institute of Technology, Stockholm, Sweden, \{emilbjo,vitalyp\}@kth.se}%
\IEEEauthorblockA{\IEEEauthorrefmark{2}
Dept. of Electrical Engineering (ISY), Linköping University, Linköping, Sweden, erik.g.larsson@liu.se}%
 \thanks{The research was supported in part by Grant 2022-04222 from the Swedish
Research Council, ELLIIT, and the KAW foundation. GPT 6 Astra assisted with the literature review, manuscript drafting, simulation coding, and proofreading, but the authors take full responsibility for the concepts and content.}
}

\maketitle

\begin{abstract}
Airy beams are often proposed in array beamforming to bend beams around obstacles. We prove that, for a single-antenna receiver, conventional beamfocusing maximizes the received power regardless of how many antennas are blocked and, in expectation, even with phase errors in the channel state information. This also holds for multi-antenna receivers under a phase-separability condition that is satisfied in rank-one channels. Numerical results confirm that unmodified beamfocusing is optimal for physically small receivers (i.e., those that fit within the beamwidth) when an obstacle is present. For physically larger receivers, the optimal precoding deviates from beamfocusing to better illuminate the receiver's unobstructed parts, but will not become Airy beams. While Airy beams can be configured to perform well, their ability to follow a bent trajectory in front of an obstacle does not maximize the received power behind it.

\end{abstract}
\vskip0.5\baselineskip
\begin{IEEEkeywords}
Beamfocusing, Airy beams, wavefront engineering, partial blockages, mmWave, THz, phase uncertainty.
\end{IEEEkeywords}

\vspace{-1mm}

\vspace{-1mm}
\section{Introduction}
\label{sec:intro}
\vspace{-1mm}
The radiative near-field of an antenna array is the region where it must consider the nonlinear wavefront curvature when transmitting to a receiver, even if the receiver is in the far-field of the individual antenna elements~\cite{Bjornson2021b,Liu2023a}. This regime is particularly relevant to millimeter-wave (mmWave) and terahertz (THz) systems, where short wavelengths and electrically large apertures with very many antennas may place most users in the near field~\cite{Liu2023a,Singh2024a}. In a single-user link with known channel coefficients, maximum-ratio transmission (MRT) maximizes the signal-to-noise ratio (SNR) subject to a total-power constraint~\cite{Uchimura2026a}. In line-of-sight (LOS) scenarios, its phase shifts compensate for the individual path distances, producing \emph{beamfocusing} with limited depth in the near field, rather than only steering in an angular direction~\cite{Bjornson2021b}. Analog precoding, typical to mmWave and THz systems, retains this phase compensation but uses equal power per antenna.

\emph{Wavefront engineering} refers to actively shaping the spatial evolution of the transmitted field, especially in the near-field, by controlling phase and amplitude over the transmitting array~\cite{Singh2024a}. Examples include Bessel beams, proposed for extended depth of focus (compared to beamfocusing) and self-healing after partial obstruction~\cite{Singh2024a,Bodet2024a}, and Airy beams, associated with curved mainlobe trajectories and self-healing~\cite{Efremidis2019a}. These features can only be partially realized with finite-sized apertures and hardware constraints (e.g., analog precoding), but a closed-form phase-only Airy-beam design was recently developed in~\cite{Zhao2026a}.

Blockage provides a prominent motivation for these self-healing beam designs and is a practical issue in mmWave/THz systems~\cite{Uchimura2026a}. We use \emph{quasi-LOS} to describe LOS links where an object obstructs the direct paths from some transmit antennas to the receiver while leaving others unaffected. Airy beams have been particularly proposed for enhancing connectivity in these scenarios, with claims that they ``\emph{realize a link by curving around an intervening obstacle}''~\cite{Guerboukha2024a} and ``\emph{curve around the LOS obstruction and deliver maximum power to the receiver}''~\cite{Chen2025a}. However, the usefulness of Airy beams in such scenarios was challenged in~\cite{Schwan2026a}, concluding that the ``\emph{beamfocusing strategy already attains near-optimal performance}''.

These developments motivate the research question addressed in this paper: \textbf{When is it desirable to switch from beamfocusing to other beams (e.g., Airy) in the presence of a partially blocking object?} We investigate this question for quasi-LOS links with analog phase-only precoding and an explicit analytically tractable antenna-attenuation model. We characterize the SNR-maximizing precoding with and without phase uncertainty at the transmitter, examine extensions to multiple receive antennas, and relate Airy-beam phase profiles to their curved spatial trajectories. The results provide a precise answer to the question under specific conditions. Simulations are used to compare beamfocusing and Airy beams to corroborate the analytical results, explore their generality, and provide further insights into the key properties of these beam types.

\vspace{-1mm}
\section{Basic System Model}
\label{sec:system_model}
\vspace{-1mm}

Consider a narrowband multiple-input single-output (MISO) link with $M$ transmit antennas and a single receive antenna. The transmitter is equipped with a uniform linear array (ULA) with inter-antenna spacing $\Delta$. Antenna $m$ is located at $(x_m,0)$, where $x_m=(m-(M+1)/2)\Delta$. The receiver is located at $(r\sin(\theta),r\cos(\theta))$ in the radiative near-field of the array, where $r$ is the distance from the array center and $\theta$ is measured from broadside. The path distance from antenna $m$ is
\vspace{-0.5mm}
\begin{equation}
 r_m=\sqrt{r^2+x_m^2-2rx_m\sin(\theta)}.
 \label{eq:distance}
\end{equation}
If there is an unobstructed LOS link, the channel can be represented by 
$\mathbf h = [h_1,\ldots,h_M]^{\Ttran} \in \mathbb{C}^M$,
defined as $h_m=\alpha_m e^{-\mathrm j k r_m}$, using the wavenumber $k=\frac{2\pi}{\lambda}$, wavelength $\lambda$, and channel amplitudes  $\alpha_m >0$.
Note that this is a general non-uniform spherical channel wave model that can describe either isotropic antenna scenarios with $\alpha_m = \frac{\lambda}{4\pi r_m}$ or more detailed models with additional antenna/channel gain variations.

The transmitter sends $\mathbf x=\sqrt{P}\mathbf w s$, where $s$ is the communication signal with $\mathbb E\{|s|^2\}=1$ and $P$ is the total transmit power. 
As we aim to analyze the fundamentals of mmWave and THz systems, we consider analog precoding with continuously adjustable phases and fixed, equal amplitudes. Hence, the precoder $\mathbf w = [w_1,\ldots,w_M]^{\Ttran}$ can be selected from the set
\vspace{-0.5mm}
\begin{equation}
 \mathcal W=\left\{\mathbf w\in\mathbb C^M:
 |w_m|=\frac{1}{\sqrt M},\ m=1,\ldots,M\right\}.
 \label{eq:feasible}
\end{equation}
Consequently, the squared $L_2$-norm is $\|\mathbf w\|_2^2=1$, and each antenna transmits with power $P/M$. The received signal is
\vspace{-0.5mm}
\begin{equation}
 y=\sqrt P\,\mathbf h^{\Ttran}\mathbf w s+n,
 \label{eq:received}
\end{equation}
where $n\sim\CN(0,\sigma_n^2)$ is the noise with variance $\sigma_n^2$.
Assuming coherent reception, the signal-to-noise ratio (SNR) is
\vspace{-0.5mm}
\begin{equation}
 \gamma(\mathbf w) =\frac{P}{\sigma_n^2} |\mathbf h^{\Ttran}\mathbf w |^2.
 \label{eq:snr_unobstruct}
\end{equation}
Among precoders $\mathbf w \in \mathcal W$, the SNR is maximized by
\vspace{-0.5mm}
\begin{equation}
 [\mathbf w_{\mathrm{foc}}]_m
 =\frac{e^{-\mathrm j\arg(h_m)}}{\sqrt M}
 =\frac{e^{\mathrm jkr_m}}{\sqrt M}.
 \label{eq:focusing}
\end{equation}
This precoder design compensates for individual channel phases, so that all antenna contributions add coherently at the receiver, and is often referred to as \emph{beamfocusing}.
This term refers to the fact that the 3 dB focus area around the receiver is confined in both angle and depth, in parts of the near-field~\cite{Bjornson2021b}.

\vspace{-1mm}
\section{Optimal Precoding under Antenna Blockage}
\label{sec:blockage}
\vspace{-1mm}

We want to analyze precoding in a quasi-LOS scenario where the direct paths from a subset of the transmit antennas to the receiver are obstructed by a physical object.
We represent this by $\mathbf D=\operatorname{diag}(d_1,\ldots,d_M)$, where $d_m\in[0,1]$ represents the amplitude attenuation for antenna $m$.
Here, $d_m=0$ represents complete blockage, $d_m=1$ represents an unobstructed antenna, and intermediate values represent that the signal passes through the object, resulting in attenuation but no appreciable phase change. Note that we are not modeling diffraction, as this effect is weak in the mmWave/THz bands.

The resulting channel vector is $\mathbf D \mathbf h$ instead of $\mathbf h$, so the SNR in \eqref{eq:snr_unobstruct} can be generalized to
\vspace{-0.5mm}
\begin{equation}
 \gamma(\mathbf w;\mathbf D) = \frac{P}{\sigma_n^2} |\mathbf h^{\Ttran} \mathbf D^{\Ttran} \mathbf w |^2
 =\frac{P}{\sigma_n^2}
 \left|\sum_{m=1}^M d_m h_m w_m\right|^2.
 \label{eq:snr}
\end{equation}
The next lemma determines the optimal precoder for this case.

\begin{lemma}
\label{lem:blockage}
For every diagonal blockage matrix $\mathbf D$ with entries in $[0,1]$, the SNR in \eqref{eq:snr} is maximized over $\mathcal W$ by the same precoder $\mathbf w_{\mathrm{foc}}$ in \eqref{eq:focusing} as in the absence of blockage. The maximum SNR is
\vspace{-0.5mm}
\begin{equation}
 \gamma_{\max}(\mathbf D)
 =\frac{P}{M\sigma_n^2}
 \left(\sum_{m=1}^M d_m\alpha_m\right)^2.
 \label{eq:max_snr}
\end{equation}
\end{lemma}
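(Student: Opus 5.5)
The plan is to bound the received amplitude by the triangle inequality and then show that $\mathbf w_{\mathrm{foc}}$ attains the bound for every $\mathbf D$. Fix any $\mathbf w\in\mathcal W$ and apply the triangle inequality to the sum in \eqref{eq:snr}:
\begin{equation*}
 \left|\sum_{m=1}^M d_m h_m w_m\right| \le \sum_{m=1}^M d_m |h_m|\,|w_m| = \frac{1}{\sqrt M}\sum_{m=1}^M d_m\alpha_m .
\end{equation*}
This step uses three facts: $d_m\ge 0$, which holds because $d_m\in[0,1]$; $|h_m|=\alpha_m$, by the channel model $h_m=\alpha_m e^{-\mathrm j k r_m}$; and $|w_m|=1/\sqrt M$, by the definition of $\mathcal W$ in \eqref{eq:feasible}. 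Squaring the bound and multiplying by $P/\sigma_n^2$ gives $\gamma(\mathbf w;\mathbf D)\le \gamma_{\max}(\mathbf D)$ as defined in \eqref{eq:max_snr}. The bound depends on $\mathbf D$ only through the nonnegative weights $d_m\alpha_m$, and it does not depend on $\mathbf w$ at all.

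Next I show the bound is attained by $\mathbf w_{\mathrm{foc}}$ in \eqref{eq:focusing}. Substituting $w_m=e^{\mathrm j k r_m}/\sqrt M$ gives $d_m h_m w_m = d_m\alpha_m/\sqrt M$. Every term is therefore a nonnegative real number, and the modulus of the sum equals the sum of the moduli. This yields equality, so $\mathbf w_{\mathrm{foc}}$ is a maximizer for every admissible $\mathbf D$. The construction of $\mathbf w_{\mathrm{foc}}$ never refers to $\mathbf D$, which is the claim that the precoder is the same as in the unblocked case; setting $\mathbf D=\mathbf I$ recovers the unobstructed optimality stated before \eqref{eq:focusing}.

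There is no real obstacle; the only point needing care is blocked antennas. If $d_m=0$, the phase $w_m$ is arbitrary, so the maximizer is not unique. The lemma claims only that $\mathbf w_{\mathrm{foc}}$ is \emph{a} maximizer, and the equality condition of the triangle inequality (all nonzero terms sharing a common phase) is met by $\mathbf w_{\mathrm{foc}}$ regardless. I would add a remark that the optimal precoder is unique up to a common phase rotation on the antennas with $d_m>0$.
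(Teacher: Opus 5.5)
Your proposal is correct and follows the paper's proof essentially verbatim: you bound $\bigl|\sum_m d_m h_m w_m\bigr|$ by the triangle inequality using $d_m\ge 0$, $|h_m|=\alpha_m$ and $|w_m|=1/\sqrt M$, and then observe that every term $d_m h_m [\mathbf w_{\mathrm{foc}}]_m = d_m\alpha_m/\sqrt M$ is real and nonnegative, so the bound is attained. Your remark on non-uniqueness (arbitrary phases on blocked antennas, and a common phase rotation otherwise) also matches the paper's discussion and footnote after the lemma.
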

\begin{IEEEproof}
For any $\mathbf w\in\mathcal W$, the triangle inequality gives
\vspace{-0.5mm}
\begin{equation}
 \left|\sum_{m=1}^M d_m h_m w_m\right|
 \leq\sum_{m=1}^M d_m|h_m||w_m|
 =\frac{1}{\sqrt M}\sum_{m=1}^M d_m\alpha_m.
\end{equation}
The precoder in \eqref{eq:focusing} attains equality because $d_m h_m[\mathbf w_{\mathrm{foc}}]_m=d_m\alpha_m/\sqrt M$ is real and nonnegative for every $m$. Squaring the bound and multiplying by $P/\sigma_n^2$ gives the final result.
\end{IEEEproof}

The lemma proves that there is no need to change the precoder when a blocking object appears, as this cannot improve the SNR.
Hence, the transmitter does not need to know $\mathbf D$ to select the SNR-maximizing phases, but only the nominal phases in $\mathbf h$.
However, to continue supporting data transmission, the receiver must feed back an SNR measurement so the data rate can be adjusted accordingly.

The physical intuition for this result is that the object attenuates certain antennas but does not affect the coherence of the wave components that reach the receiver.
If some antennas are entirely blocked (i.e., $d_m = 0$), they can continue transmitting as before, but the signal components will be absorbed or reflected by the object.
In this special case, the SNR-maximizing precoder is not unique, as the blocked antennas could switch their phases to anything they like.\footnote{To be precise, the original precoder in \eqref{eq:focusing} is not a unique solution, since a common phase rotation of all entries will leave the SNR unchanged.}

\subsection{Precoding with Phase-Uncertainty}
\label{subsec:uncertainty}

Next, we consider whether channel uncertainty affects the result in Lemma~\ref{lem:blockage}.
Suppose the transmitter knows the nominal vector $\mathbf h$, while the actual channel is $\tilde{\mathbf{h}} = [ \tilde h_1, \ldots,  \tilde h_M]^{\Ttran}$ with
\vspace{-0.5mm}
\begin{equation}
 \tilde h_m=h_m e^{\mathrm j\epsilon_m},\qquad
 \epsilon_m\overset{\mathrm{i.i.d.}}{\sim} \mathcal{N} (0,\sigma_\phi^2).
 \label{eq:phase_errors}
\end{equation}
The precoder must be selected without knowing the phase-error realizations $\epsilon_1, \ldots, \epsilon_M$, so we will maximize the average SNR.
To this end, we define $z_m=d_mh_mw_m$ and $\rho=e^{-\sigma_\phi^2}$, and notice that the error distribution implies that
\vspace{-0.5mm}
\begin{equation}
 \mathbb E\{e^{\mathrm j(\epsilon_m-\epsilon_\ell)}\}
 =\begin{cases}1,&m=\ell,\\ \rho,&m\ne\ell.
 \end{cases}
\end{equation}
For any deterministic precoder $\mathbf w$, the average SNR  becomes
\vspace{-0.5mm}
\begin{align}
 \overline\gamma(\mathbf w;\mathbf D)
 &=\frac{P}{\sigma_n^2}
 \mathbb E\left\{\left|\sum_{m=1}^M z_m e^{\mathrm j\epsilon_m}\right|^2\right\}\notag\\
 &=\frac{P}{\sigma_n^2}\left(
 \rho|\mathbf h^{\Ttran}\mathbf D\mathbf w|^2
 +(1-\rho)\sum_{m=1}^M d_m^2\alpha_m^2|w_m|^2\right).
 \label{eq:average_general}
\end{align}
As $|w_m|^2 \!=\!\frac{1}{M}$ for $\mathbf w\in\mathcal W$ and using $\gamma(\mathbf w;\mathbf D)$ in \eqref{eq:snr}, we have
\vspace{-0.5mm}
\begin{equation}
 \overline\gamma(\mathbf w;\mathbf D)
 =\rho\gamma(\mathbf w;\mathbf D)
 +\frac{P(1-\rho)}{M\sigma_n^2}
 \sum_{m=1}^M d_m^2\alpha_m^2.
 \label{eq:average_analog}
\end{equation}
The first term is proportional to the SNR without phase errors, and the second term is independent of the precoder. Hence, Lemma~\ref{lem:blockage} can be generalized as follows.

\begin{corollary}
\label{cor:uncertainty}
The precoder $\mathbf w_{\mathrm{foc}}$ in \eqref{eq:focusing} maximizes the average SNR in \eqref{eq:average_analog}, among all $\mathbf w \in \mathcal W$, for every blockage matrix $\mathbf D$ and every finite $\sigma_\phi^2\geq0$. The maximum is
\vspace{-0.5mm}
\begin{equation}
 \overline\gamma_{\max}(\mathbf D)
 \hspace{-0.5mm}=\hspace{-0.5mm}\frac{P}{M\sigma_n^2} \! \left[
 \rho\left(\sum_{m=1}^M d_m\alpha_m\right)^{\!2}
 \hspace{-1mm}+\hspace{-0.5mm}(1-\rho)\sum_{m=1}^M d_m^2\alpha_m^2\right]\!.
 \label{eq:average_max}
\end{equation}
\end{corollary}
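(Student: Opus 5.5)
The plan is to reduce the corollary to Lemma~\ref{lem:blockage} through the decomposition in \eqref{eq:average_analog}. First I would check the step leading to \eqref{eq:average_general}. Expand $\bigl|\sum_m z_m e^{\mathrm j\epsilon_m}\bigr|^2=\sum_{m,\ell} z_m z_\ell^* e^{\mathrm j(\epsilon_m-\epsilon_\ell)}$ and take expectations using the stated moments. The case $m\neq\ell$ uses independence and the Gaussian characteristic function, $\mathbb E\{e^{\mathrm j\epsilon_m}\}=e^{-\sigma_\phi^2/2}$, so that $\rho=e^{-\sigma_\phi^2}$. This gives $\rho\bigl|\sum_m z_m\bigr|^2+(1-\rho)\sum_m|z_m|^2$. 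Here $|z_m|^2=d_m^2\alpha_m^2|w_m|^2=d_m^2\alpha_m^2/M$ for every $\mathbf w\in\mathcal W$, which yields \eqref{eq:average_analog}.

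Next I would observe that $\rho=e^{-\sigma_\phi^2}\in(0,1]$ for every finite $\sigma_\phi^2\geq 0$. The first term of \eqref{eq:average_analog} is therefore a strictly positive multiple of $\gamma(\mathbf w;\mathbf D)$. The second term does not depend on $\mathbf w$ at all, since the amplitudes are fixed by the constraint set $\mathcal W$. Consequently, maximizing $\overline\gamma(\mathbf w;\mathbf D)$ over $\mathcal W$ is equivalent to maximizing $\gamma(\mathbf w;\mathbf D)$ over $\mathcal W$. By Lemma~\ref{lem:blockage}, this maximum is attained by $\mathbf w_{\mathrm{foc}}$ for every admissible $\mathbf D$.

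Finally, I would substitute $\gamma_{\max}(\mathbf D)$ from \eqref{eq:max_snr} into \eqref{eq:average_analog}. Factoring out $P/(M\sigma_n^2)$ then gives \eqref{eq:average_max} directly.

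The only point needing care is the requirement $\rho>0$, which is exactly where the finiteness of $\sigma_\phi^2$ enters. If $\rho=0$, every precoder would be optimal; the statement still holds in that case, but only trivially. No other step should be an obstacle, since the argument is a direct reduction to Lemma~\ref{lem:blockage}.
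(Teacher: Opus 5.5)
Your proposal is correct and matches the paper's argument: the paper also derives \eqref{eq:average_analog}, observes that the first term is $\rho\gamma(\mathbf w;\mathbf D)$ while the second is independent of the precoder, and then invokes Lemma~\ref{lem:blockage}. One minor remark: $\rho\geq 0$ already suffices for $\mathbf w_{\mathrm{foc}}$ to be a maximizer, and $\rho>0$ only matters for the claim that the two problems have identical sets of maximizers.
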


This result implies that even uncertainty about the receiver location (or imperfect transmitter calibration) is not a reason to deviate from the beamfocusing precoder.
Admittedly, maximizing the average SNR is not the same as maximizing the rate, but these objectives are closely related in systems with a large number of antennas, since averaging over errors then appears naturally through law-of-large-numbers arguments.

\begin{remark}
The results in this section hold under more general conditions than we considered, because the ULA geometry and the non-uniform spherical-wave channel model were not used in the proofs. Similarly, it is sufficient that the phase errors are independent and have $\mathbb E\{e^{\mathrm j\epsilon_m}\}\geq 0$. 
By contrast, the results are not guaranteed to hold in the absence of per-antenna power constraints (since blocked antennas can then reallocate power to other antennas), discrete phase constraints (since blockage of an antenna affects the desired phase quantization at other antennas), or blockage-induced phase changes (if $d_m$ has a phase, antenna $m$ should change $w_m$ to compensate for it).
\end{remark}

\vspace{-2mm}
\section{Extension to Multiple Receive Antennas}
\label{sec:mimo}
\vspace{-1mm}

We next consider a multiple-input multiple-output (MIMO) channel with $N$ receive antennas, while retaining analog precoding with $\mathbf w \in\mathcal W$.
We let $\mathbf H=[\mathbf h_1,\ldots,\mathbf h_M]\in\mathbb C^{N\times M}$ denote the unobstructed channel, where $\mathbf h_m \in\mathbb C^{N}$ is the channel from transmit antenna $m$ to all receive antennas.
For simplicity, if the transmit antenna $m$ is obstructed, then this status affects all receive antennas.\footnote{This modeling is most applicable to the downlink direction, where the number of transmit antennas typically is much larger than the corresponding number of antenna elements at the receiving user device (i.e., $M \gg N$).}
We can then express the received signal $ \mathbf y \in \mathbb{C}^{N}$ as
\vspace{-0.5mm}
\begin{equation}
 \mathbf y=\sqrt P\,\mathbf H\mathbf D\mathbf w s+\mathbf n,
 \qquad \mathbf n\sim\CN(\mathbf 0,\sigma_n^2\mathbf I_N),
 \label{eq:mimo_received}
\end{equation}
where the blockage matrix $\mathbf D$ is defined as in Section~\ref{sec:blockage}.

\subsection{Fixed Receive Combining}

Suppose the receiver forms $\mathbf v^{\Htran}\mathbf y$ using a fixed combining vector $\mathbf v\in\mathbb C^N$ that satisfies $\| \mathbf v \|_2^2=1$.
The resulting SNR is
\vspace{-0.5mm}
\begin{equation}
 \gamma(\mathbf w,\mathbf v;\mathbf D)
 =\frac{P} {\sigma_n^2}  |\mathbf v^{\Htran}\mathbf H\mathbf D\mathbf w|^2.
 \label{eq:mimo_fixed_snr}
\end{equation}
By defining the scalar coefficients $g_m=\mathbf v^{\Htran}\mathbf h_m$, \eqref{eq:mimo_fixed_snr} becomes $(P/\sigma_n^2) |\sum_m d_m g_m w_m|^2$, which has the same structure as the MISO expression in \eqref{eq:snr}. We then obtain the following result.

\begin{corollary}
\label{cor:mimo_fixed}
For any fixed unit-norm receive combiner $\mathbf v$, the SNR in \eqref{eq:mimo_fixed_snr} is maximized over $\mathcal W$ by beamfocusing based on the effective channel $\mathbf v^{\Htran}\mathbf H = [g_1,\ldots,g_M]^{\Ttran}$ with\footnote{If $g_m=0$, the phase can be selected arbitrarily, e.g., $\arg(0)=0$.}
\vspace{-0.5mm}
\begin{equation}
 [\mathbf w_{\mathbf v}]_m
 =\frac{e^{-\mathrm j\arg(g_m)}}{\sqrt M},
 \label{eq:mimo_fixed_precoder}
\end{equation}
regardless of the value of $\mathbf D$.
The maximum SNR is
\vspace{-0.5mm}
\begin{equation}
 \gamma_{\max}(\mathbf v;\mathbf D)
 =\frac{P}{M\sigma_n^2}
 \left(\sum_{m=1}^M d_m|g_m|\right)^2.
\end{equation}
\end{corollary}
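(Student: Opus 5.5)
The plan is to reduce Corollary~\ref{cor:mimo_fixed} to the MISO setting of Lemma~\ref{lem:blockage} and then repeat its triangle-inequality argument with complex effective gains in place of $h_m$. Since $\mathbf v$ is fixed and $\mathbf D$ is diagonal, I will first write
\begin{equation}
 \mathbf v^{\Htran}\mathbf H\mathbf D\mathbf w=\sum_{m=1}^M \mathbf v^{\Htran}\mathbf h_m d_m w_m=\sum_{m=1}^M d_m g_m w_m .
\end{equation}
This identity follows from the column partition $\mathbf H=[\mathbf h_1,\ldots,\mathbf h_M]$, because $\mathbf H\mathbf D\mathbf w=\sum_m d_m w_m\mathbf h_m$. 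As a result, the SNR in \eqref{eq:mimo_fixed_snr} takes exactly the form of \eqref{eq:snr}, with $h_m$ replaced by $g_m$.

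Next, I will apply the bound from the proof of Lemma~\ref{lem:blockage}. For any $\mathbf w\in\mathcal W$, the triangle inequality gives
\begin{equation}
 \left|\sum_{m=1}^M d_m g_m w_m\right|\le \sum_{m=1}^M d_m|g_m||w_m|=\frac{1}{\sqrt M}\sum_{m=1}^M d_m|g_m|.
\end{equation}
This uses $d_m\ge 0$ and $|w_m|=1/\sqrt M$. Both properties hold regardless of the value of $\mathbf D$, and the bound does not involve $\mathbf D$ in any way except through the nonnegative weights $d_m$.

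To show the bound is attained, I will substitute $\mathbf w_{\mathbf v}$ from \eqref{eq:mimo_fixed_precoder}. Each term then becomes
\begin{equation}
 d_m g_m e^{-\mathrm j\arg(g_m)}/\sqrt M=d_m|g_m|/\sqrt M,
\end{equation}
which is real and nonnegative. Squaring the bound and multiplying by $P/\sigma_n^2$ then yields $\gamma_{\max}(\mathbf v;\mathbf D)$.

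The only point needing care is the one that distinguishes this setting from Lemma~\ref{lem:blockage}: some $g_m$ may vanish, since $\mathbf v$ can be orthogonal to $\mathbf h_m$, so $\arg(g_m)$ is undefined. The convention in the footnote handles this. If $g_m=0$, the term contributes $0=d_m|g_m|/\sqrt M$ for any phase choice, so equality in the triangle inequality is preserved. I do not expect any further obstacle, because the argument is otherwise a direct relabeling of the earlier proof.
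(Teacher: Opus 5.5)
Your proposal is correct and follows essentially the paper's route: the paper also rewrites $\mathbf v^{\Htran}\mathbf H\mathbf D\mathbf w=\sum_m d_m g_m w_m$ and applies the triangle-inequality argument of Lemma~\ref{lem:blockage} with $h_m$ replaced by $g_m$. Your explicit check that a vanishing $g_m$ preserves equality spells out what the paper leaves to its footnote. This check is worth including, since Lemma~\ref{lem:blockage} was stated for $\alpha_m>0$.
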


This result shows that the same beamfocusing precoder is optimal both before and during blockage, as long as the receive combiner remains the same in the blocked (quasi-LOS) and unblocked (LOS) scenarios. 
To perform beamfocusing, the transmitter must know the phases of $g_m$, but not~$\mathbf D$.

\subsection{Optimized Analog Receive Combining}
\label{subsec:analog_combining}

The receiver can benefit from revising its combiner after the blockage has occurred. 
We now optimize the receive combiner jointly with the transmit precoder, assuming an analog implementation with continuously adjustable phases (similar to the transmitter):
\vspace{-0.5mm}
\begin{equation}
 \mathcal V=\{\mathbf v\in\mathbb C^N:|v_n|=1/\sqrt{N},\ n=1,\ldots,N\}.
 \label{eq:mimo_receive_constraint}
\end{equation}
Since $\|\mathbf v\|_2^2=1$, the SNR in \eqref{eq:mimo_fixed_snr} becomes
\vspace{-0.5mm}
\begin{equation}
 \gamma(\mathbf w,\mathbf v;\mathbf D)
 =\frac{P}{\sigma_n^2}
 |\mathbf v^{\Htran}\mathbf H\mathbf D\mathbf w|^2.
 \label{eq:mimo_analog_snr}
\end{equation}
For a given $\mathbf w$, the optimal receive combiner aligns its phases element-by-element with $\mathbf q=\mathbf H\mathbf D\mathbf w$, yielding
\vspace{-0.5mm}
\begin{equation}
 \max_{\mathbf v\in\mathcal V}\gamma(\mathbf w,\mathbf v;\mathbf D)
 =\frac{P}{N\sigma_n^2}\|\mathbf H\mathbf D\mathbf w\|_1^2,
 \label{eq:mimo_analog_optimal_receive}
\end{equation}
using the $L_1$-norm. We will now derive a sufficient condition on the channel matrix for which one transmit--receive phase configuration is optimal for every blockage matrix.

\begin{theorem}
\label{thm:mimo_separable_phases}
Denote the $nm$:th entry of the channel matrix $\mathbf H$ as $H_{nm}$. Suppose all its entries have separable phases:
\vspace{-0.5mm}
\begin{equation}
 H_{nm}=A_{nm}e^{\mathrm j(\eta_n+\xi_m)},
 \qquad A_{nm}\geq0,
 \label{eq:mimo_separable_phases}
\end{equation}
where $\eta_n$ and $\xi_m$ are real. For every diagonal blockage matrix $\mathbf D$ with entries in $[0,1]$, the SNR in \eqref{eq:mimo_analog_snr} is jointly maximized over $\mathcal W$ and $\mathcal V$ by the same precoder and combiner pair:
\vspace{-0.5mm}
\begin{equation}
 w_m=\frac{e^{-\mathrm j\xi_m}}{\sqrt M},
 \qquad v_n=\frac{e^{\mathrm j\eta_n}}{\sqrt N}.
 \label{eq:mimo_separable_optimum}
\end{equation}
The maximum SNR is
$\frac{P}{NM\sigma_n^2}
 \left(\sum_{n=1}^N\sum_{m=1}^M d_m A_{nm}\right)^2$.
\end{theorem}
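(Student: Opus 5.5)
The approach mirrors the proof of Lemma~\ref{lem:blockage}: derive a universal upper bound via the triangle inequality, then show the proposed pair attains it. For arbitrary $\mathbf w\in\mathcal W$ and $\mathbf v\in\mathcal V$, we expand the bilinear form entrywise as
\begin{equation}
 \mathbf v^{\Htran}\mathbf H\mathbf D\mathbf w=\sum_{n=1}^N\sum_{m=1}^M v_n^{*} H_{nm} d_m w_m .
\end{equation}
Applying the triangle inequality over the double sum, with $|v_n|=1/\sqrt N$, $|w_m|=1/\sqrt M$, $d_m\ge 0$, and $|H_{nm}|=A_{nm}$, gives
\begin{equation}
 \left|\mathbf v^{\Htran}\mathbf H\mathbf D\mathbf w\right|\le \frac{1}{\sqrt{NM}}\sum_{n=1}^N\sum_{m=1}^M d_m A_{nm}.
\end{equation}
Squaring and multiplying by $P/\sigma_n^2$ then bounds the SNR in \eqref{eq:mimo_analog_snr} by the claimed maximum. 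This bound holds for every feasible pair and every $\mathbf D$.

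Next we verify attainment. Substituting \eqref{eq:mimo_separable_optimum} and \eqref{eq:mimo_separable_phases}, each term becomes
\begin{equation}
 v_n^{*}H_{nm}d_m w_m=\frac{e^{-\mathrm j\eta_n}}{\sqrt N}A_{nm}e^{\mathrm j(\eta_n+\xi_m)}d_m\frac{e^{-\mathrm j\xi_m}}{\sqrt M}=\frac{d_mA_{nm}}{\sqrt{NM}}.
\end{equation}
Every term is therefore real and nonnegative, so the triangle inequality holds with equality and the upper bound is achieved. Because the pair does not depend on $\mathbf D$, the same configuration is optimal for every blockage matrix, as the theorem states.

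No step poses a real obstacle. The only substantive point is recognizing that separability is exactly what lets a single rank-one phase pattern $e^{-\mathrm j\eta_n}e^{-\mathrm j\xi_m}$ align all $NM$ terms simultaneously. In general this is impossible, because the $N$-by-$M$ phase matrix need not factor into a row part and a column part. The conjugation convention in $\mathbf v^{\Htran}$ must also be tracked so that $v_n=e^{\mathrm j\eta_n}/\sqrt N$ indeed cancels $e^{\mathrm j\eta_n}$. The identity \eqref{eq:mimo_analog_optimal_receive} provides a consistency check: with the proposed $\mathbf w$, we have $\mathbf q=\mathbf H\mathbf D\mathbf w$ with $q_n=e^{\mathrm j\eta_n}\sum_m d_mA_{nm}/\sqrt M$, and $\|\mathbf q\|_1^2/N$ reproduces the same value.
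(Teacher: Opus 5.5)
Your proof is correct and matches the paper's argument: a triangle-inequality bound on $\sum_n\sum_m v_n^*H_{nm}d_mw_m$ using the unit-modulus constraints, then showing the separable-phase pair makes every summand equal to $d_mA_{nm}/\sqrt{NM}$, so the bound is attained for every $\mathbf D$. Your consistency check via the $L_1$-norm expression for the optimal combiner is a correct extra that the paper does not include.
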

\begin{IEEEproof}
For any feasible $\mathbf w$ and $\mathbf v$, we obtain
\vspace{-0.5mm}
\begin{align}
 &|\mathbf v^{\Htran}\mathbf H\mathbf D\mathbf w|
 =\left|\sum_{n=1}^N\sum_{m=1}^M v_n^*H_{nm}d_mw_m\right|\notag\\
 &\leq\sum_{n=1}^N\sum_{m=1}^M |v_n||H_{nm}|d_m|w_m|=\frac{1}{\sqrt{NM}}\sum_{n=1}^N\sum_{m=1}^M d_mA_{nm}
\end{align}
using the triangle inequality.
The configuration in \eqref{eq:mimo_separable_optimum} makes each summand $v_n^*H_{nm}d_mw_m=d_mA_{nm}/\sqrt {NM}$ real and nonnegative, so it attains the bound for any value of $\mathbf D$. 
\end{IEEEproof}

The phase separability condition in \eqref{eq:mimo_separable_phases} might seem peculiar, but it is approximately satisfied in many scenarios.
A key example with exact separability is a rank-one channel matrix $\mathbf H=\mathbf a\mathbf b^{\Ttran}$, for which
$A_{nm}=|a_n||b_m|$, $\eta_n=\arg(a_n)$, and $\xi_m=\arg(b_m)$. 
Such channels appear exactly in plane-wave LOS channels (far-field in free space) and approximately in near-field scenarios where the receiver aperture fits within the transmitter's beamwidth; see Section~\ref{subsec:beamsMIMO} for an example.

\begin{figure*}[t!]
\centering
\vspace{-1mm}
\subfloat[\vspace{-1mm}Original Airy beam]{  \hspace{-2mm}
\begin{overpic}[trim={0 0 1.47cm 0.7cm},clip,width=0.5\columnwidth]{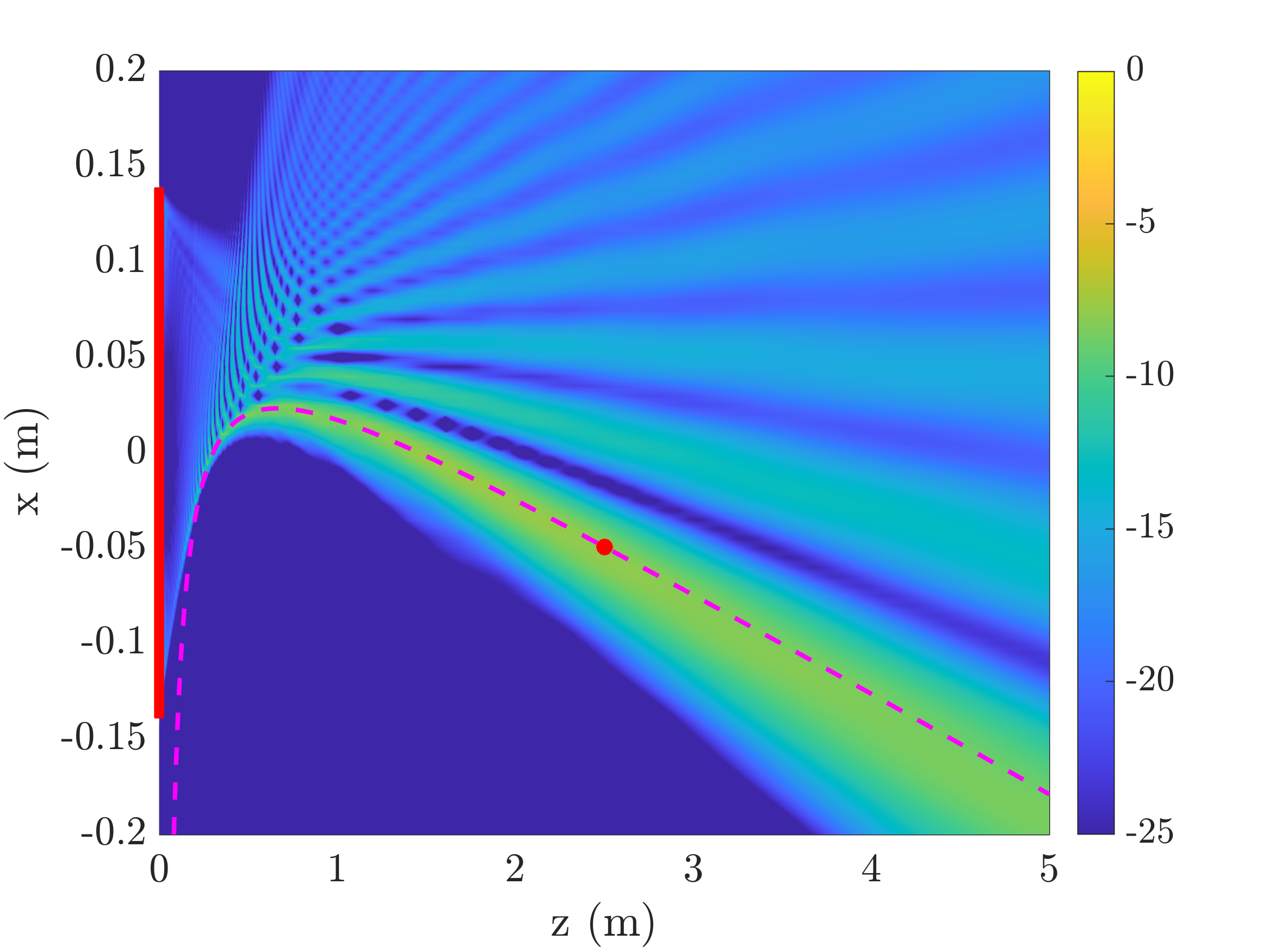}%
\put(61.5,44){\small \color{white} $-8.1$\,dB}
 \put(61,44){\color{white} \vector(-1,-1){8}}
\end{overpic}
\label{fig:blockage1}  \hspace{-1.2mm}}
\subfloat[\vspace{-1mm}Airy beam under blockage]{
\begin{overpic}[trim={0 0 1.47cm 0.7cm},clip,width=0.5\columnwidth]{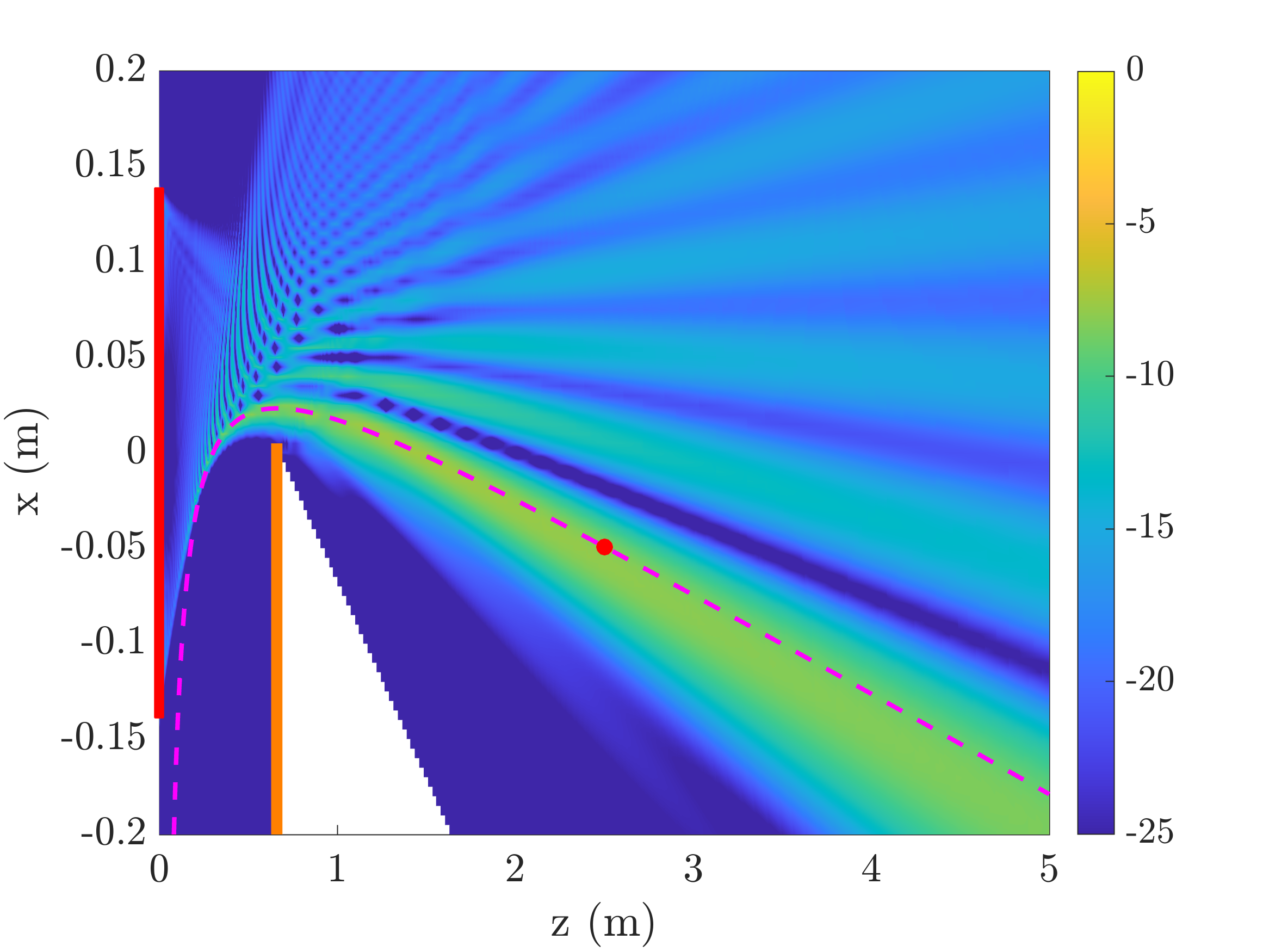}
\put(61.5,44){\small \color{white} $-8.2$\,dB}
 \put(61,44){\color{white} \vector(-1,-1){8}}
\end{overpic}
\label{fig:blockage2}  \hspace{-1.2mm}}
\subfloat[\vspace{-1mm}Original beamfocusing]{
\begin{overpic}[trim={0 0 1.47cm 0.7cm},clip,width=0.5\columnwidth]{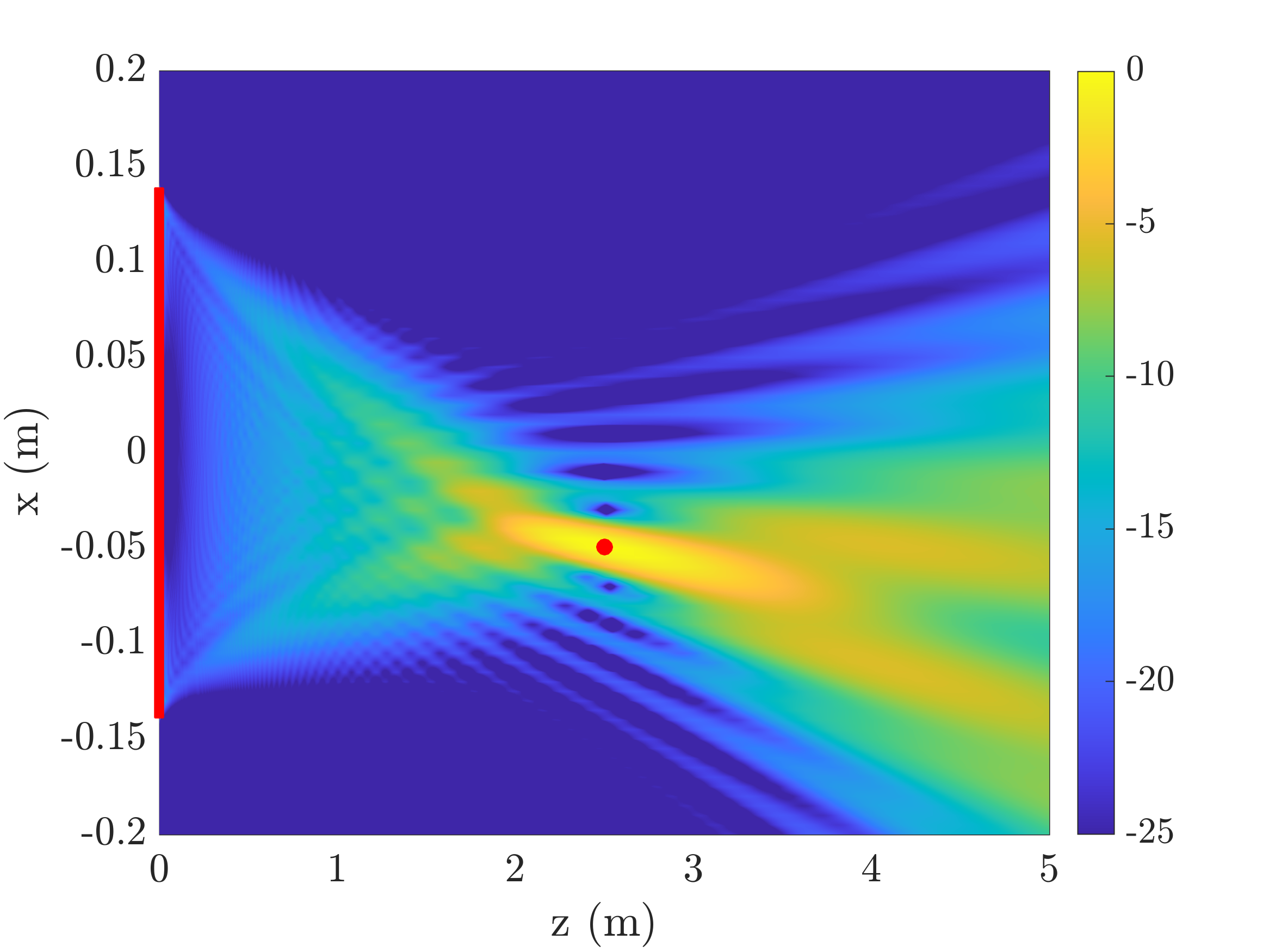}%
\put(61.5,44){\small \color{white} $-0.0$\,dB}
 \put(61,44){\color{white} \vector(-1,-1){8}}
\end{overpic}
\label{fig:blockage3} \hspace{-1.2mm}} 
\subfloat[\vspace{-1mm}Beamfocusing under blockage]{
\begin{overpic}[trim={0 0 1.47cm 0.7cm},clip,width=0.5\columnwidth]{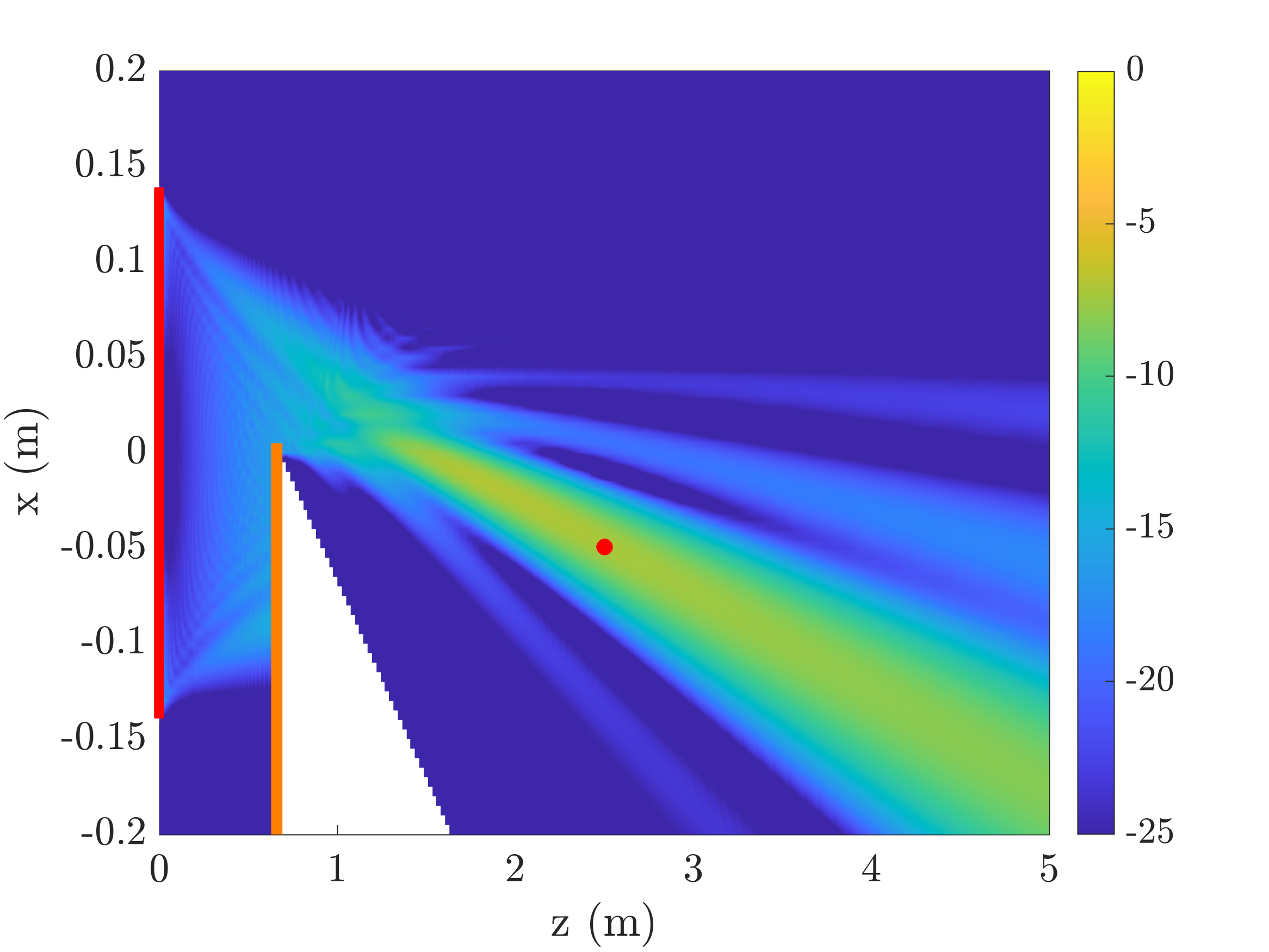}%
\put(61.5,44){\small \color{white} $-7.2$\,dB}
 \put(61,44){\color{white} \vector(-1,-1){8}}
\end{overpic}
\label{fig:blockage4}}
\caption{Example of the normalized power gain (in dB) achieved when transmitting from an $M=256$ antenna ULA at $140$\,GHz toward a marked point with a red circle. Different patterns are observed depending on the precoding (Airy or beamfocusing) and whether there is a blocking object.} \label{fig:blockages} \vspace{-5mm}
\end{figure*}

\vspace{-1mm}
\section{Beamfocusing Versus Airy Beams}
\label{sec:airy}
\vspace{-1mm}

Precoding that generates Airy beams has been described as suitable for quasi-LOS scenarios because it has a mainlobe with a bent locus, in contrast to the ellipsoidal mainlobe of beamfocusing.
In this section, we will compare these beam shapes and begin by describing how Airy beams can be generated by a ULA with analog precoding.
Recall that the $m$:th antenna has coordinate $x_m$ along the $x$-axis. We now let $z>0$ denote the coordinate in the perpendicular direction.

An Airy beam is created using the phase profile \cite[Eq.~(6)]{Zhao2026a}:
\vspace{-0.5mm}
\begin{equation}
 \phi_{\mathrm A}(x)=\frac{(2\pi B)^3}{3}x^3
 -\frac{k}{2F}x^2-k\sin(\vartheta)\,x,
 \label{eq:airy_phase}
\end{equation}
where $B \neq 0$ controls bending, $F>0$ is a focusing-distance parameter, and $\vartheta$ controls steering.
This third-order polynomial is sampled at the antenna coordinates (and the phase is wrapped modulo $2\pi$), leading to the Airy precoder
\vspace{-0.5mm}
\begin{equation}
 [\mathbf w_{\mathrm A}]_m
 =\frac{e^{-\mathrm j\phi_{\mathrm A}(x_m)}}{\sqrt M}.
 \label{eq:airy_precoder}
\end{equation}
Using a Gaussian aperture approximation \cite[Eqs.~(17)--(18)]{Zhao2026a}, the locus of the mainlobe can be expressed in the $xz$-plane as
\vspace{-0.5mm}
\begin{equation}
 x_{\mathrm A}(z)=a_{\mathrm A}-b_{\mathrm A}z-\frac{c_{\mathrm A}}{z},
 \label{eq:Airy_locus}
\end{equation}
using $\xi_{\mathrm p}\approx-1.0188$ (the location of the first maximum of the Airy function) and
\vspace{-0.5mm}
\begin{align}
 b_{\mathrm A}&=\xi_{\mathrm p}\lambda B+\sin(\vartheta)
+\frac{c_{\mathrm A}}{F^2}-c_{\mathrm A}
  \left( \frac{4\lambda}{\pi M^2\Delta^2} \right)^2, \notag\\
 a_{\mathrm A}&=\frac{2c_{\mathrm A}}{F}, \qquad c_{\mathrm A}=\frac{1}{16\lambda\pi^2B^3}.
\end{align}
The locus curve in \eqref{eq:Airy_locus} captures the essence of Airy beams:
\begin{itemize}
\item  \textbf{Nearby beam direction:} When $z$ is small but non-zero, the trajectory approaches the shifted hyperbola $ x_{\mathrm A}^{\textrm{near}}(z)=a_{\mathrm A}-c_{\mathrm A}/z$. The curve should be ignored when it diverges.

\item \textbf{Turning point:} The trajectory changes direction at $z = \sqrt{c_{\mathrm A}/b_{\mathrm A}}$ if $b_{\mathrm A} c_{\mathrm A}>0$, which is the root to $ x_{\mathrm A}'(z)=-b_{\mathrm A}+\frac{c_{\mathrm A}}{z^2}=0$. At the turning point, the sign of $B$ determines the bending direction: $B>0$ implies downward bending, $B<0$ implies upward bending.

\item  \textbf{Faraway beam direction:} When $z$ is large, the trajectory approaches the straight line $ x_{\mathrm A}^{\textrm{far}}(z)=a_{\mathrm A}-b_{\mathrm A}z$. Far-field beam steering matches this direction asymptotically using $w_m = e^{-\mathrm j kx_m\sin(\theta_{\mathrm s})} / \sqrt{M}$, where  $\theta_{\mathrm s}=-\arctan(b_{\mathrm A})$.

\end{itemize}

\subsection{Beam Comparison in a Quasi-LOS MISO Scenario}
\vspace{-1mm}

Returning to the MISO scenario in Section~\ref{sec:blockage}, 
Lemma~\ref{lem:blockage} implies that
 $\gamma(\mathbf w_{\mathrm A};\mathbf D)
 \leq\gamma(\mathbf w_{\mathrm{foc}};\mathbf D)$
when both precoders serve the same receiver with the same power constraint and $\mathbf h$ being known. 
Hence, under the considered blockage model, one cannot enhance the SNR by switching from beamfocusing to an Airy beam, regardless of how the configuration parameters $(B,F,\vartheta)$ are selected. Such a change might, however, modify the beam shape around the receiver, as studied next.

Fig.~\ref{fig:blockages} shows the results of a numerical example adapted from \cite[Fig.~3]{Zhao2026a}, in which a ULA with $M=256$ antennas and $\Delta=\lambda/2$ is used at $140$\,GHz. The aperture length is $27$\,cm. The multiple antennas and beam shaping result in a power gain that we calculate at each location by dividing the actual received power by what it would have been with a single antenna at the origin. The value is shown in dB-scale relative to the theoretical maximum, so that $0$\,dB represents $M=256$.

Fig.~\ref{fig:blockage1} shows the power gain pattern of a transmitted Airy beam with $B=5$\,m$^{-1}$, $F=0.5$\,m, and $\vartheta = -0.03$\,rad. The array is indicated as a red line along the vertical axis, and the approximate trajectory curve from \eqref{eq:Airy_locus} is shown as a dashed curve that roughly follows the actual beam locus.
Next, Fig.~\ref{fig:blockage2} shows the power gain pattern when an object (shown in orange) is placed in front of the lower half of the array at a distance of $0.66$\,m and lets no signal through (i.e., $d_m=0$ for affected antennas).
The mainlobe is barely affected (the Airy configuration was selected to enable this), but a region behind the object cannot be reached at all, and there is also slightly less power gain in some of the upper sidelobes.

Suppose the intended receiver is located at the red circle, which is $z=2.5$\,m from the array and placed on the trajectory curve with $x_{\mathrm A}(2.5)\approx-0.05$\,m.
If our goal is to maximize the SNR at this receiver, we could instead use beamfocusing configured for this location.
The resulting beam pattern is shown in Fig.~\ref{fig:blockage3}.
When the blockage is present, the lower half of the array cannot reach the receiver, so the beam pattern is distorted into the one shown in Fig.~\ref{fig:blockage4}.
Interestingly, in the presence of the blocking object, Airy and beamfocusing achieve visually similar beam shapes around the receiver, while the main differences are elsewhere.
Simply put, the Airy beam uses antennas that cannot reach the receiver to create the curved part of the main lobe in front of the object---the beam does not bend around the object but bends before it.

The power gain at the receiver is $256$ ($0$\,dB) with beamfocusing and $39$ ($-8.1$\,dB) with the Airy beam in the absence of blockage. It drops to $49$ ($-7.2$\,dB) and $38$ ($-8.2$\,dB), respectively, under blockage.
Hence, beamfocusing remains the preferable precoding even when there is a blocking object, as proved by Lemma~\ref{lem:blockage}.
Only $112$ of the $256$ antennas have an unobstructed view of the receiver, so when using beamfocusing, we lose both the blocked antennas' power and contribution to the array gain.
This explains why the power gain becomes $256 \cdot ( \frac{112}{256})^2 = 49$.
By contrast, the power gain of the Airy beam is barely affected by the object, as it spreads its power along a curved trajectory. However, the peak gain is only $15$\% of that with beamfocusing (without blockage).

\subsection{Beam Comparison in a Quasi-LOS MIMO Scenario}
\label{subsec:beamsMIMO}
\vspace{-1mm}

\begin{figure}[t!]
\centering
\includegraphics[trim={0.5cm 0 1.2cm 0.6cm},clip,width=0.9\columnwidth]{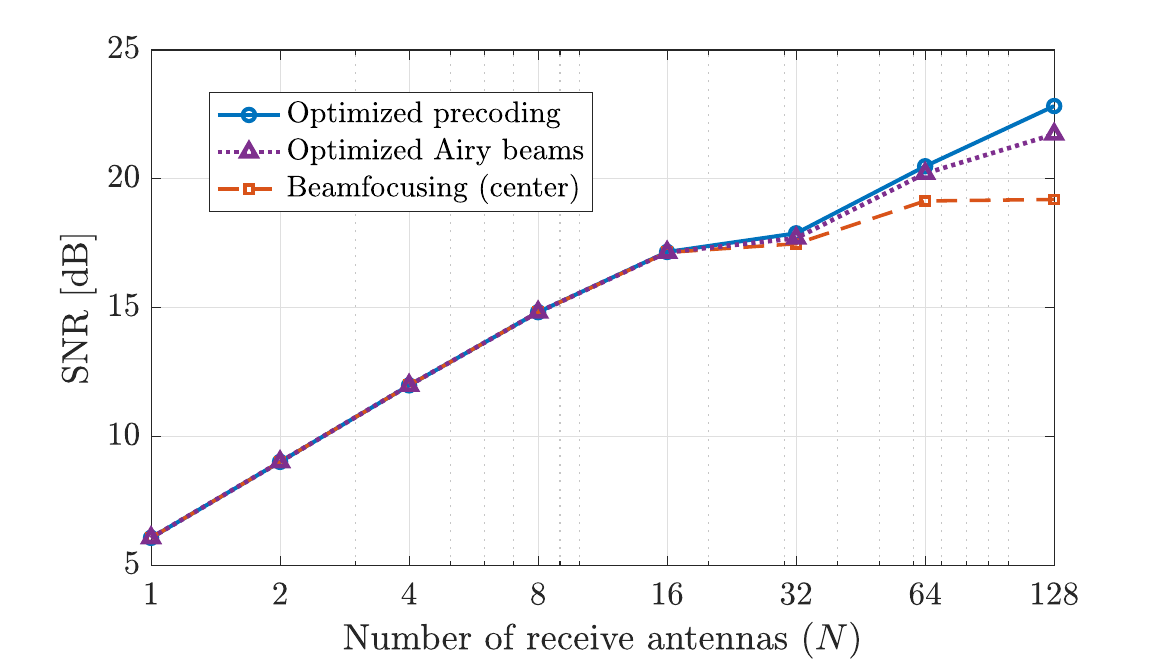} \vspace{-4mm}
\caption{The SNR achieved under blockage for varying number of receive antennas with an $M=256$ antenna ULA transmitter at $140$\,GHz.} \label{fig:MIMO_blockage} \vspace{-6mm}
\end{figure}

We now extend the simulation to consider a receiver with $N\in \{1,2,4,8,16,32,64,128 \}$ antennas.
The antennas are arranged as two parallel half-wavelength ULAs at $140$\,GHz, and there is a blocking object of length $5$\,cm between the arrays, with its center displaced by $-1$\,cm. We compare jointly optimized precoding/combining $\mathbf{w} \in \mathcal{W}, \mathbf{v} \in \mathcal{V}$ (implemented by updating them iteratively until convergence) with numerically optimized Airy precoding parameters $(B,F,\vartheta)$ and fixed beamfocusing from \eqref{eq:focusing} with the receiver's center as the focal point (both with individually optimized combining).

Fig.~\ref{fig:MIMO_blockage} shows the SNR as a function of the number of receive antennas, and without loss of generality, $P/\sigma_n^2$ was selected to achieve $10$\,dB SNR with beamfocusing at $N=1$ without blockage. The SNR generally increases with the number of receive antennas. Further, there is no benefit from deviating from beamfocusing for $N \leq 16$ as the channel matrix is practically rank-one, confirming the analysis in Section~\ref{sec:mimo}.

However, for $N\geq 32$, the receiving ULA is wider than the half-power beamwidth achieved by beamfocusing ($\geq 3.4$\,cm versus $2.1$\,cm), resulting in a multi-rank channel. We then deviate significantly from the conditions in Theorem~\ref{thm:mimo_separable_phases} (including the simplified blockage model) and observe a benefit of optimizing the precoding based on the blocking object to illuminate the unobstructed parts of the receiver. The numerically optimized Airy beam is then better than fixed beamfocusing, as the Airy beam might have any cubic phase profile across the transmitter (it doesn't even have to be curved). Still, we achieve an even higher SNR (up to $\approx 1$\,dB in this example) by optimizing the precoding without the third-order polynomial phase-profile constraint that Airy beams impose.


\section{Conclusion}

In this paper, we examined whether using Airy beams at the transmitter can outperform conventional focused beams in mitigating partial blockage in near-field quasi-LOS communications. 
Under our modeling assumptions, we mathematically and numerically illustrated that \emph{the same beamfocusing-type precoder achieves the highest SNR in quasi-LOS channels} for both single-antenna and small-scale multi-antenna receivers (rank-one channel), as well as for both certain and uncertain channel models. 
The transmitter can even be unaware of the blockage. 
We further noticed that the bending property of the Airy beam family allows us to tailor its trajectory to a specific obstacle, thereby achieving an almost ``blockage-agnostic'' configuration with a similar beam pattern with and without that obstacle. Yet, Airy beams are \emph{not optimal} in terms of the SNR achieved at a receiver behind the obstacle.
The suboptimal performance of bent Airy beams was shown for both small-scale receivers, where they are inferior to beamfocusing, and receivers larger than the beamwidth.
In the latter case, they are outperformed by joint precoder/combiner optimization, which focuses signals on the unobstructed parts of the receiver.

\bibliographystyle{IEEEtran}
\bibliography{IEEEabrv,ref_short_VP}
\vspace{-1mm}

\end{document}